\documentclass{nextgame2026} 

\usepackage{mathtools}          
\usepackage{mathrsfs}           
\mathtoolsset{showonlyrefs}     
\usepackage{graphicx}           
\usepackage{subcaption}         
\usepackage[space]{grffile}     
\usepackage{url}
\usepackage{mathrsfs}          
\usepackage{hyperref}
\usepackage{enumitem}

\usepackage{mathtools}
\usepackage{booktabs}
\usepackage{multirow}
\usepackage{zref-clever}

\DeclareMathOperator*{\argmax}{arg\,max}

\title[Parametric Open Source Games]{Parametric Open Source Games}

\optauthor{%
\Name{Aleksandar Todorov\textsuperscript{$\star$}} \Email{a.todorov.4@student.rug.nl}\\
\Name{Jesse ten Napel} \Email{j.ten.napel@student.rug.nl}\\
\Name{Alexander Müller} \Email{a.k.muller.2@student.rug.nl}\\
\addr University of Groningen, Groningen, the Netherlands \\
\addr Safe AI Netherlands}

\begin{document}

\maketitle

\begin{abstract}%
Open-source game theory studies agents whose behavior may depend on one another's decision procedures, but most existing models use discrete or symbolic programs. We introduce \emph{parametric open-source games}, a continuous analogue of program equilibria in which players choose parameter vectors and semantics maps convert the full parameter profile into mixed actions in an underlying finite game. We establish equilibrium existence results, derive an exact coupling threshold at which selfish gradient ascent in symmetric $2\times2$ games switches from defection toward cooperation, and give a one-dimensional boundary test for parametric program Nash equilibria. We further extend the framework to a neural semantics class whose first-order cooperation condition is governed by the ratio of cross-player to self-player sensitivity. Across canonical games, the framework shows how access to internal parameterizations can qualitatively reshape learning dynamics and equilibrium structure, and how sufficiently strong open-source coupling can steer selfish optimization toward cooperative outcomes.
\end{abstract}


\section{Introduction}
\label{introduction}

Open-source game theory studies settings where agents are able to inspect one another's internal decision procedures. In the classical program equilibrium framework, players submit programs that may inspect the opponent's program before outputting an action in an underlying base game \citep{tennenholtz_program_2004, fortnow_program_2009}. This possibility can qualitatively change the strategic reasoning of agents and consequently cooperation, punishment, and coordination may become contingent on what an agent proves, simulates, or infers about its opponent’s internal policy, yielding outcomes that differ sharply from the Nash equilibria of the underlying normal form game \citep{lavictoire2014program,critch_parametric_2019,critch_cooperative_2022}. Most existing open-source models are nevertheless symbolic or proof-theoretic \citep{oesterheld_robust_2019, barasz_robust_2021}, while modern learning systems are usually continuous, differentiable, and parameterized by high-dimensional vectors \citep{mnih_human-level_2015,vaswani_attention_2017,silver_mastering_2017}. Conversely, differentiable games and multi-agent optimization study smooth parameterized objectives, but usually treat each player's policy as depending only on its own parameters \citep{fudenberg_theory_1998,basar_dynamic_1999,chasnov_convergence_2020, mazumdar_gradient-based_2020, lin_finite-time_2020}.

In this work, we develop a parametric model that preserves the central open-source feature, namely that a player's behavior may depend on an opponent's internal description, while enabling standard tools from continuous optimization. In a \textit{parametric open-source game}, each player chooses a parameter vector, and a continuous semantics map converts the full parameter profile into mixed actions in an underlying finite game. The induced game is ordinary and continuous over parameter space, but it retains the open-source feature that behavior can depend on opponents' internal descriptions, providing a connection between program-style transparency and gradient-based learning.

Our contributions are threefold. We formalize parametric open-source games and give basic equilibrium existence results for the induced parametric game. We then analyze a sigmoid semantics in symmetric $2\times2$ games, deriving a first-order cooperation threshold and a boundary PPNE test, both supported by empirical verification. Finally, we introduce a neural semantics class that preserves the same first-order criterion through the ratio of cross-player to self-player sensitivity. The framework suggests that open-source strategic reasoning need not be tied to symbolic representations, as once encoded by a continuous semantics map, it can reshape both learning dynamics and equilibrium structure in ways that are analyzable with standard tools from continuous optimization.

\section{Parametric Open-Source Games}
\label{sec:posgs}

Proofs of formal statements are deferred to \zcref[S]{app:proofs}. Let $G=(N,(S_i)_{i\in N},(u_i)_{i\in N})$ be a finite normal form game, where $N = \{1, \dots, n \}$ denotes a set of $n$ agents with each agent $i \in N$ having a finite action set $S_i$ and utility $u_i:S\to\mathbb{R}$ with $S=\prod_i S_i$. For a mixed profile $\sigma\in\prod_i\Delta(S_i)$, we write the (continuous) multilinear extension of $u_i$ as $u_i(\sigma)=\sum_{s\in S}\Bigl(\prod_{j\in N}\sigma_j(s_j)\Bigr)u_i(s)$.
    
Each player $i$ chooses a parameter vector in a nonempty compact convex set $\Theta_i\subset\mathbb{R}^{d_i}$, and we denote by $\Theta=\prod_i\Theta_i$ and $\theta=(\theta_i)_{i\in N}$. Next, we state the two central definitions of our model.

\begin{definition}[Parametric open-source game]
A semantics for player $i$ is a continuous map \\ $\phi_i:\Theta\to\Delta(S_i)$. Given semantics $\phi=(\phi_i)_i$, the induced payoff is
\begin{equation}
    U_i(\theta)=u_i(\phi_1(\theta),\ldots,\phi_n(\theta)).
\end{equation}
The induced parametric game is then $\bar{G} = (N,(\Theta_i)_{i\in N},(U_i)_{i\in N})$. It is closed-source if every $\phi_i$ depends only on $\theta_i$, and open-source otherwise.
\end{definition}


\begin{definition}[Parametric program Nash equilibrium]
A profile $\theta^\star\in\Theta$ is a parametric program Nash equilibrium (PPNE) if, for every $i$ and every $\theta_i\in\Theta_i$,
\begin{equation}
    U_i(\theta_i^\star,\theta_{-i}^\star)\ge U_i(\theta_i,\theta_{-i}^\star).
\end{equation}
\end{definition}

Since each $U_i$ is continuous on compact $\Theta$, Glicksberg's theorem gives a mixed-strategy Nash equilibrium over $\Theta$. If $\theta_i\mapsto U_i(\theta_i,\theta_{-i})$ is quasiconcave on $\Theta_i$ for each fixed $\theta_{-i}$, Kakutani's theorem also gives a pure PPNE. We prove this in \zcref[S]{app:existence}. We next specialize the framework to two-action games, where the effect of open-source dependence can be analyzed explicitly.

\section{Semantics Families and Game Dynamics}\label{sec:semantics_families}

We now instantiate the framework with semantics maps that are simple enough to analyze explicitly while still exhibiting open-source effects. Experimental details are deferred to \zcref[S]{app:experiments}.

\subsection{Sigmoid Semantics}
\begin{figure}[thbp]
    \centering
    \includegraphics[width=\linewidth]{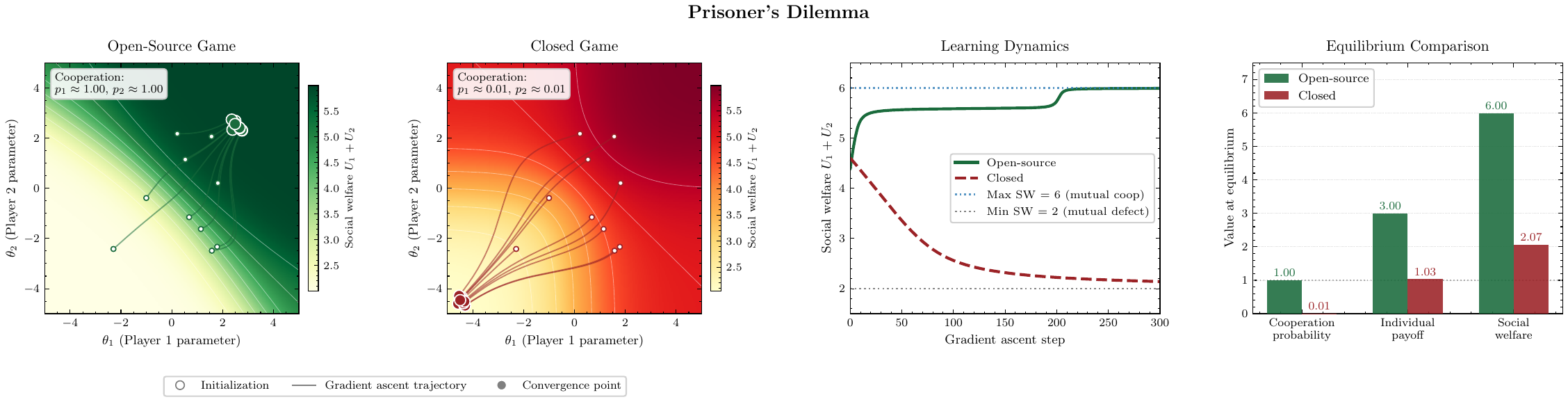}
    \caption{Open-source versus closed-source sigmoid semantics in the Prisoner's Dilemma. Under open-source semantics, projected gradient-ascent trajectories move toward the cooperative corner and high welfare, whereas closed-source dynamics move toward defection.}
    \label{fig:pd_open_vs_closed}
\end{figure}

We adopt a two-action setting $S_i=\{C,D\}$ and use a one-dimensional parameter $\theta_i\in[-B,B] \subseteq \mathbb{R}$. The sigmoid semantics determines the probability of playing $C$ (cooperation) as
\begin{equation}
    p_i(\theta)=\phi_i(\theta)(C)=\sigma(\theta_i+\gamma\theta_{-i}),
    \qquad \sigma(x)=(1+e^{-x})^{-1}.
\end{equation}
The coupling $\gamma\ge0$ measures how strongly player $i$'s behavior responds to the opponent's parameter, while $\gamma=0$ recovers the closed-source parameterization.

As a base experiment, we study selfish projected gradient ascent in the induced parametric game,
\begin{equation}
    \theta_i^{t+1}=
    \Pi_{[-B,B]}\!\left(\theta_i^t+\eta\,\partial_{\theta_i}U_i(\theta^t)\right),
    \qquad i\in\{1,2\}, \; \eta > 0,
\end{equation}
using finite-difference gradients of $U_i$ with respect to each player's own parameter. The underlying interaction remains a one-shot symmetric $2 \times 2$ game.

The trajectories in \zcref[S]{fig:pd_open_vs_closed} visualize how the semantics map changes and which regions of parameter space are attractive under this learning rule. In the Prisoner's Dilemma, closed-source dynamics move toward defection, while sufficiently coupled open-source dynamics move toward the cooperative boundary. Nonetheless, the cooperation results and basin of attraction strongly depend on the choice of $\gamma$. The following theorem provides the exact conditions under which, for a general symmetric $2 \times 2$ game with payoffs $R$ for mutual cooperation, $S$ for being exploited, $T$ for exploiting, and $P$ for mutual defection and an arbitrary $\gamma \geq 0$, the cooperative regime first emerges under gradient ascent as a local property.

\begin{theorem}[Phase transition]
\label{thm:critical_coupling}
Assume $R+T-P-S>0$ and define
\begin{equation}
    \gamma^\star=\frac{T+P-R-S}{R+T-P-S}.
\end{equation}
For initializations near $\theta=(0,0)$, selfish gradient ascent points toward lower cooperation when $\gamma<\gamma^\star$ and toward higher cooperation when $\gamma>\gamma^\star$.
\end{theorem}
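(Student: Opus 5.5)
The plan is a direct first-order computation of each player's own-parameter gradient at the symmetric point $\theta=(0,0)$. Continuity then extends its sign to a neighbourhood, and finally I translate that sign into a statement about cooperation probabilities.

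Write $x_1=\theta_1+\gamma\theta_2$ and $x_2=\theta_2+\gamma\theta_1$, so that $p_i=\sigma(x_i)$. Using the multilinear extension,
\begin{equation}
U_1=p_1p_2R+p_1(1-p_2)S+(1-p_1)p_2T+(1-p_1)(1-p_2)P.
\end{equation}
By the chain rule, with $\partial x_1/\partial\theta_1=1$ and $\partial x_2/\partial\theta_1=\gamma$,
\begin{equation}
\partial_{\theta_1}U_1=\sigma'(x_1)\bigl[p_2(R-T)+(1-p_2)(S-P)\bigr]+\gamma\,\sigma'(x_2)\bigl[p_1(R-S)+(1-p_1)(T-P)\bigr].
\end{equation}
At $\theta=(0,0)$ we have $p_1=p_2=\tfrac12$ and $\sigma'(0)=\tfrac14$. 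Substituting gives
\begin{equation}
\partial_{\theta_1}U_1(0,0)=\tfrac18\bigl[(R+S-T-P)+\gamma(R+T-P-S)\bigr].
\end{equation}
Since $R+T-P-S>0$ by assumption, this quantity is positive exactly when $\gamma>\gamma^\star$ and negative exactly when $\gamma<\gamma^\star$. The game and the semantics are symmetric under swapping the players, so $\partial_{\theta_2}U_2(0,0)$ takes the same value.

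Next, I interpret the update direction. Because $B>0$, the point $(0,0)$ is interior, so near it the projection $\Pi_{[-B,B]}$ is inactive and the step is $\eta(g,g)$, where $g$ is the common gradient value above. A displacement $(g,g)$ changes each $x_i$ by $(1+\gamma)g$. Since $\sigma$ is increasing, both $p_1$ and $p_2$ move in the direction of $\operatorname{sign}(g)$. Hence the dynamics point toward higher cooperation for both players when $\gamma>\gamma^\star$ and toward lower cooperation when $\gamma<\gamma^\star$. At $\gamma=\gamma^\star$ the first-order term vanishes and the theorem makes no claim.

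To pass from the single point $(0,0)$ to "initializations near $(0,0)$", I note that $U_i$ is $C^\infty$, so the map $\theta\mapsto\partial_{\theta_i}U_i(\theta)$ is continuous. For fixed $\gamma\neq\gamma^\star$, a strict sign at $(0,0)$ therefore persists on a neighbourhood. I also need the induced changes in $p_1$ and $p_2$ to keep the same sign there. This follows because $\sigma'>0$ and, on the neighbourhood, both gradient components share the strict sign, so the increments of both $x_1$ and $x_2$ share it as well.

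I expect the main obstacle to be interpretive rather than computational: making "points toward higher cooperation" precise. I would state it as the claim that the first step strictly increases, or decreases, both $p_i$. I would also note that the neighbourhood shrinks as $\gamma\to\gamma^\star$, so the result is genuinely local, in line with the theorem's phrasing.
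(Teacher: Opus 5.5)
Your proposal is correct and follows essentially the same route as the paper. You evaluate $\partial_{\theta_1}U_1$ by the chain rule at $p_1=p_2=\tfrac12$ to get $\tfrac18\bigl[(R+S-T-P)+\gamma(R+T-P-S)\bigr]$, then use symmetry and continuity to extend its sign to a neighbourhood of $(0,0)$. Your extra step (that both $p_i$ move monotonically in the sign of the gradient, since $\gamma\ge0$ and $\sigma'>0$) makes explicit what the paper only asserts when it says both parameters are pushed toward $\pm B$.
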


Thus $\gamma^\star$ is the local open-source coupling needed to reverse the initial incentive of gradient ascent. For Stag Hunt, $\gamma^\star\leq 0$, so cooperation is already locally attractive. We verify \zcref[S]{thm:critical_coupling} empirically across four canonical games, and \zcref[S]{fig:critical_coupling_phase_transition} shows that the empirical transition in mean cooperation matches the analytical value of $\gamma^\star$ closely.

\begin{figure}[thbp]
    \centering
    \includegraphics[width=0.92\linewidth]{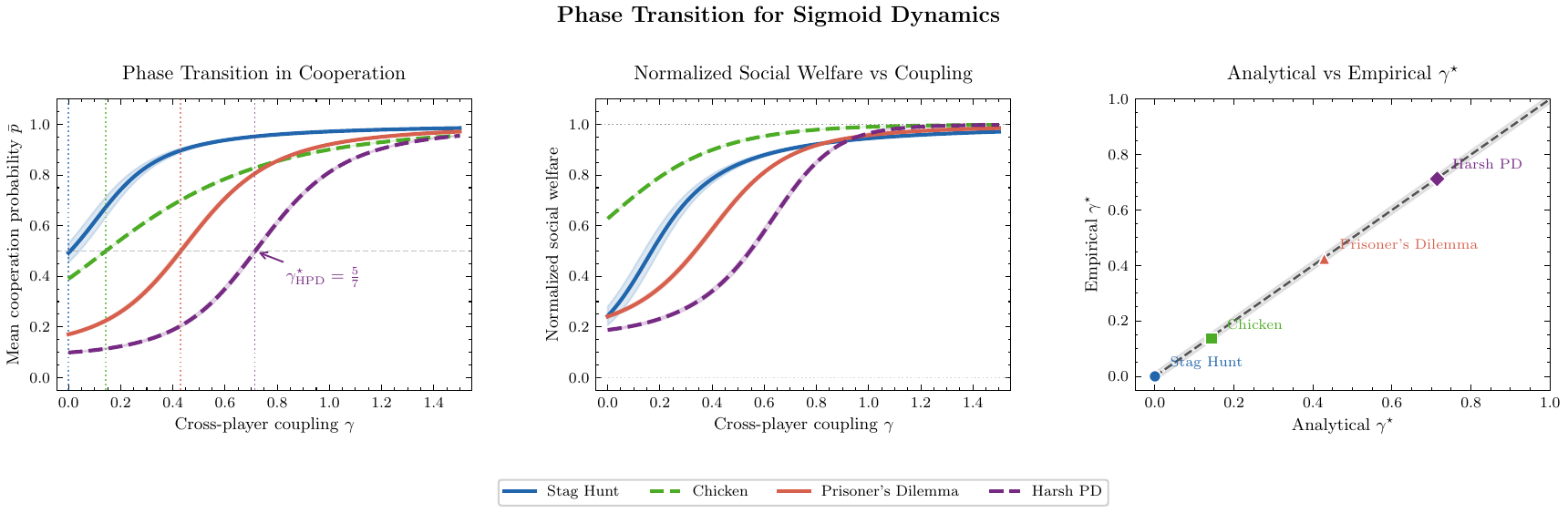}
    \caption{Phase transition in $2\times2$ games. Panel (a) shows mean terminal cooperation as a function of $\gamma$, with dotted lines marking $\gamma^\star$ and the dashed line marking $\bar p=0.5$. Panel (b) shows normalized social welfare, and Panel (c) compares analytical and empirical transition points. All results are averaged over 20 seeds.}
    \label{fig:critical_coupling_phase_transition} 
\end{figure}

\subsection{Boundary Equilibria}
The result from \zcref[S]{thm:critical_coupling} describes only the local direction of gradient ascent near the symmetric midpoint. It therefore explains when learning initially moves toward cooperation, but not whether the limiting boundary point is stable against unilateral deviations. This distinction matters because projected gradient ascent in the sigmoid model often converges to the boundary of $\Theta=[-B,B]^2$. Hence, the natural equilibrium candidates are boundary points such as $(B,B)$ and $(-B,-B)$.

\begin{theorem}[Boundary PPNE characterization]
\label{thm:boundary_characterization}
For fixed $\gamma\geq0$, suppose that $\theta_i\mapsto U_i(\theta_i,B)$ is differentiable on $(-B,B)$ and satisfies
\begin{equation}
    \frac{d}{d\theta_i}U_i(\theta_i,B)\geq0
    \qquad
    \text{for all }\theta_i\in(-B,B)
    \text{ and } i\in\{1,2\}.
\end{equation}
Then $(B,B)$ is a PPNE.
\end{theorem}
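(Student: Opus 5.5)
We reduce the PPNE condition at $(B,B)$ to a one-dimensional monotonicity statement for each player and apply the mean value theorem (or the fundamental theorem of calculus) on $[-B,B]$. By the definition of a parametric program Nash equilibrium, we must show that for each $i\in\{1,2\}$ and every $\theta_i\in[-B,B]$,
\begin{equation}
    U_i(B,B)\ \ge\ U_i(\theta_i,B),
\end{equation}
where $U_i(\theta_i,B)$ denotes player $i$'s payoff when its own parameter is $\theta_i$ and the opponent's is $B$. The symmetric $2\times2$ setting lets us treat player $1$ and player $2$ in the same way, though the hypothesis is stated for both players, so no symmetry argument is actually needed.

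First, we record continuity of $f_i(t):=U_i(t,B)$ on the closed interval $[-B,B]$. This holds because $U_i=u_i\circ\phi$, where the sigmoid semantics $p_i(\theta)=\sigma(\theta_i+\gamma\theta_{-i})$ is continuous (indeed smooth) in $\theta$, and the multilinear extension $u_i$ is a polynomial in the mixed actions. This step matters because the hypothesis only gives differentiability on the open interval $(-B,B)$, while the comparison involves the endpoint $t=B$ and possibly $t=-B$.

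Second, we fix $\theta_i\in[-B,B)$; the case $\theta_i=B$ is trivial. We apply the mean value theorem to $f_i$ on $[\theta_i,B]$: continuity on the closed interval and differentiability on the interior give some $\xi\in(\theta_i,B)\subseteq(-B,B)$ with
\begin{equation}
    f_i(B)-f_i(\theta_i)=f_i'(\xi)\,(B-\theta_i).
\end{equation}
By hypothesis $f_i'(\xi)\ge0$, and $B-\theta_i>0$, so $f_i(B)\ge f_i(\theta_i)$. This is exactly the required inequality for player $i$, and repeating it for the other player concludes the argument.

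The only delicate point is the boundary issue in the first step. The derivative condition is assumed only on the open interval, so we must ensure that $f_i$ does not jump at the endpoints. Continuity of $\phi$, which is built into the definition of a semantics, handles this. An alternative is to note that $f_i$ is nondecreasing on $(-B,B)$ and pass to limits at $\pm B$ using continuity. We would also remark that the test is genuinely one-dimensional: it only requires checking the sign of $\partial_{\theta_i}U_i$ along the edge $\theta_{-i}=B$ of the square. In the sigmoid model this sign can be written explicitly via the chain rule as $\sigma'(\theta_i+\gamma B)$ times an affine expression in the cooperation probabilities.
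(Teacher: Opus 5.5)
Your proof is correct and follows the paper's argument. The paper likewise sets $f_i(\theta_i)=U_i(\theta_i,B)$, concludes from $f_i'\ge 0$ on $(-B,B)$ that $f_i$ is nondecreasing on $[-B,B]$, and hence obtains $U_i(\theta_i,B)\le U_i(B,B)$, which is the PPNE condition. Your mean value theorem step and your appeal to continuity of the semantics at the endpoints spell out what the paper leaves implicit when it passes from the open interval to the closed one.
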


\zcref[S]{thm:boundary_characterization} shows that equilibrium verification does not require analyzing the full two-dimensional landscape. For each player, we freeze the opponent at the candidate boundary value and maximize over the player's own parameter interval. \zcref[S]{fig:ppne_verification} applies this test in the Prisoner's Dilemma. Under open-source coupling, the unilateral payoff is maximized at the cooperative boundary; under closed-source semantics, it is maximized at the defective boundary. This verifies the limit points observed in \zcref[S]{fig:pd_open_vs_closed} as PPNEs. A weaker-coupling example in which open-source semantics does not yield a PPNE is given in \zcref[S]{app:exp_3_ppne_verification} and \zcref[S]{fig:ppne_verification_nonppne}.

\begin{figure}[thpb]
    \centering
    \includegraphics[width=0.92\linewidth]{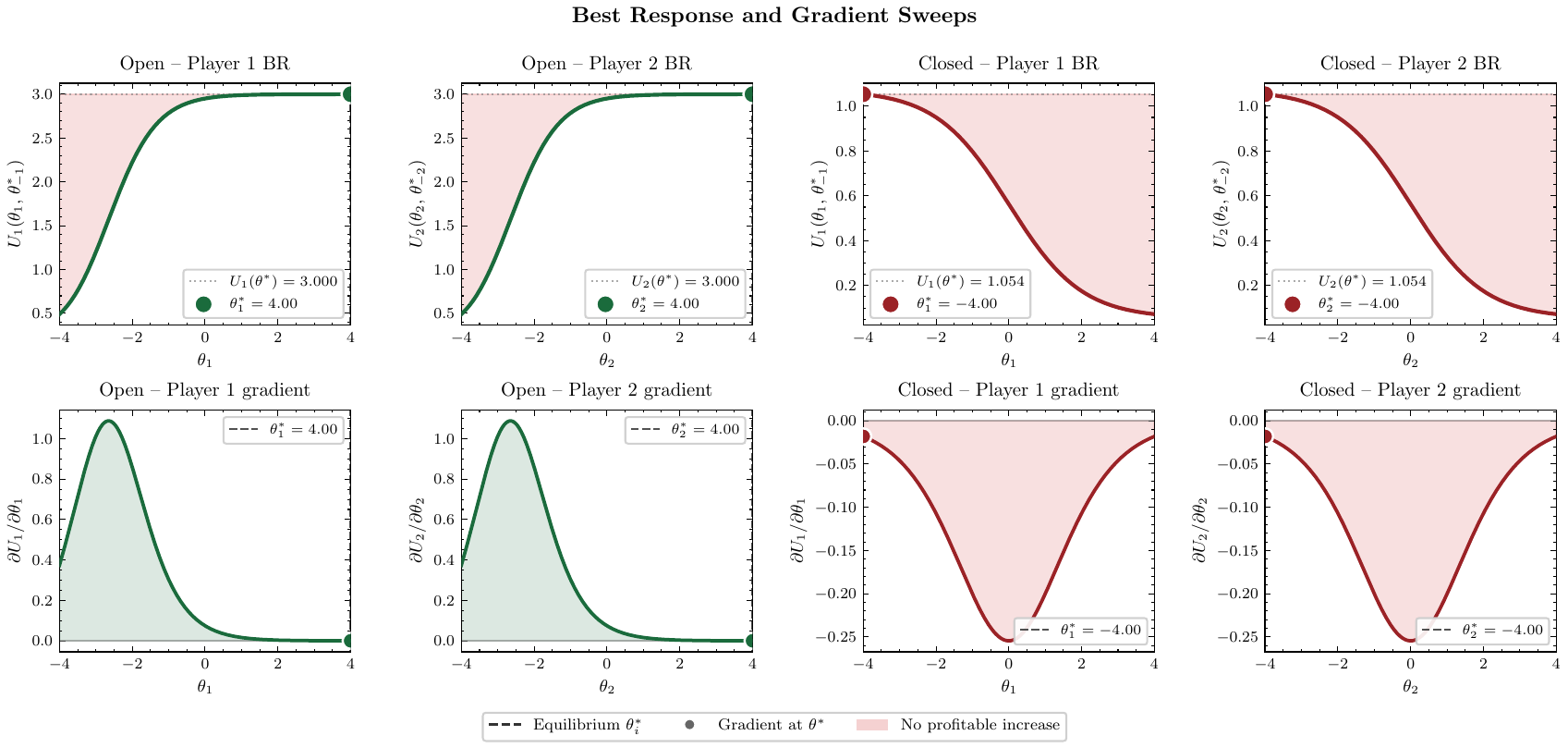}
    \caption{Boundary PPNE verification in the Prisoner's Dilemma. For each candidate equilibrium, the opponent is fixed at its boundary value, and the remaining player's payoff is evaluated over $\theta_i\in[-B,B]$. Open-source semantics maximizes payoff at the cooperative boundary, while closed-source semantics maximizes payoff at the defective boundary.}
    \label{fig:ppne_verification}
\end{figure}

\subsection{Neural Open-Source Semantics}\label{sec:neural_semantics}
The sigmoid semantics isolates open-source dependence through a single scalar $\gamma$, which makes the threshold analysis transparent. However, it also hard-codes a linear dependence on the opponent's parameter. To allow richer semantics while preserving the same local interpretation, we consider
\begin{equation}
    p_i(\theta)
    =
    \sigma\!\left(
    \alpha\theta_i+\beta\theta_{-i}
    +h(\theta_i^2,\theta_{-i}^2,\theta_i\theta_{-i};W)
    \right), \qquad h(x;W)=W_2^\top\tanh(W_1^\top x+b_1),
\end{equation}
where $K \in \mathbb{N}$ is the network hidden width and $W = (W_1, b_1, W_2)$, where $W_1 \in \mathbb{R}^{3 \times K}$, $b_1 \in \mathbb{R}^K$, $W_2 \in \mathbb{R}^K$. The (learnable) scalars $\alpha \in \mathbb{R}$ and $\beta \in \mathbb{R}$ have a direct interpretation, namely that $\alpha$ measures first-order sensitivity to the player's own parameter, while $\beta$ measures first-order sensitivity to the opponent's parameter. The residual network $h$ adds nonlinear dependence on the full parameter profile, but it receives only quadratic features. Therefore, its first-order derivative vanishes at $\theta=(0,0)$, so it cannot change the local incentive condition at the midpoint. This constraint ensures the model is more expressive away from the midpoint, but its first-order open-source coupling remains identifiable.

Under this constraint, the sigmoid threshold from \zcref[S]{thm:critical_coupling} extends cleanly. If $\alpha>0$, then near $\theta=(0,0)$ the relevant coupling is no longer $\gamma$ itself, but the sensitivity ratio $\beta/\alpha$. In particular, selfish gradient ascent initially points toward cooperation exactly when $\frac{\beta}{\alpha}>\gamma^\star$ with $\gamma^\star$ as in \zcref[S]{thm:critical_coupling}. The formal statement and proof are given in \zcref[S]{app:neural_threshold}.

\zcref[S]{fig:neural_semantics_learning} tests whether this first-order condition continues to organize learning once the semantics are neural. We compare sigmoid closed-source semantics, sigmoid open-source semantics, fixed neural semantics with $\beta=0$, fixed neural semantics with $\beta/\alpha=1.5$, and two jointly learned neural semantics. In the learned conditions, both the player parameters and the neural semantics parameters are optimized. The warm-start condition initializes $(\alpha,\beta)=(1,1.5)$, already above the cooperative threshold in the relevant games, whereas the cold-start condition initializes $(\alpha,\beta)=(0,0)$.

The results show that fixed neural semantics reproduce the sigmoid baselines when their first-order ratios match. Neural-fixed-closed tracks sigmoid-closed, and neural-fixed-open tracks sigmoid-open. This supports the interpretation of $\beta/\alpha$ as the effective local open-source coupling. The learned conditions show a stronger point: warm-started neural semantics reach the same high-welfare regime as the fixed open-source models, but cold-started neural semantics do not reliably discover it. Thus, the threshold is not merely a representational condition, but also acts as an optimization barrier: the model class can represent cooperative open-source dependence, but gradient-based learning may fail to find it unless the first-order coupling is initialized in the right regime.

\begin{figure}[thpb]
    \centering
    \includegraphics[width=\linewidth]{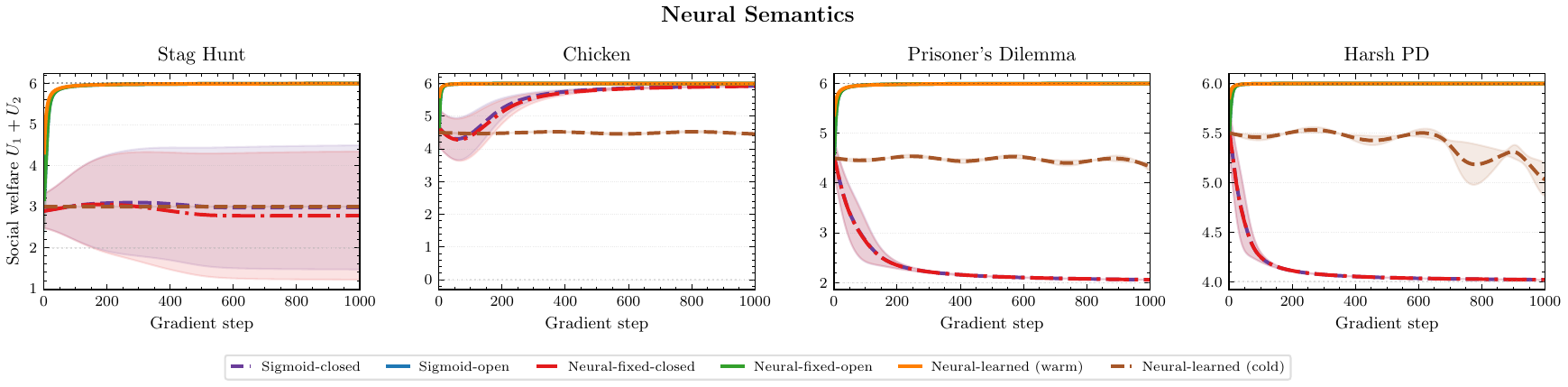}
    \caption{Learning curves under neural open-source semantics across canonical $2\times2$ games. Fixed neural semantics match the corresponding sigmoid baselines when the first-order ratio $\beta/\alpha$ is matched. Jointly learned neural semantics reach high welfare from a warm initialization above the threshold, but cold starts do not reliably discover cooperative coupling. Shaded regions indicate one standard deviation across seeds.}
    \label{fig:neural_semantics_learning}
\end{figure}

\section{Discussion and Conclusion}
\label{sec:discussion}

We introduced parametric open-source games as a continuous model of strategic interaction in which each agent chooses a parameter vector and a semantics map converts the full parameter profile into mixed actions in an underlying finite game. The model preserves the central mechanism of open-source reasoning \citep{tennenholtz_program_2004}, namely that strategic behavior may be conditioned on another agent's internal description, while making the resulting game amenable to tools from continuous optimization and differentiable games.

Within this framework, our results show that open-source dependence can reshape both learning dynamics and equilibrium structure, analogous to the program equilibrium setting \citep{tennenholtz_program_2004, fortnow_program_2009, lavictoire2014program}. The sigmoidal and neural families of open-source semantics studied within this work both exhibit the same organizing principle: cooperation becomes locally attractive when cross-player sensitivity is sufficiently large relative to self-sensitivity. For the sigmoid model, this yields an explicit critical coupling, while for the neural model, the same condition is preserved through the first-order ratio between cross-player and self-player sensitivity. We then separate this local dynamical analysis from global equilibrium verification by using boundary best-response sweeps, which test whether a candidate limit point is stable against unilateral deviations.

The presented models are intentionally idealized. They assume full access to opponent parameters, focus on symmetric two-player examples, and study one-shot base games. Nevertheless, they provide an interpretable starting point for studying how transparency over internal representations can affect incentives, learning dynamics, and equilibrium structure in parametric multi-agent systems. Future work should relax these assumptions by considering partial or noisy transparency, certifiable properties instead of full parameter disclosure, asymmetric semantics, larger populations, and sequential environments such as Markov games. These extensions would connect the framework more directly to realistic multi-agent systems, where agents may observe only incomplete but strategically relevant information about one another's internal structure. Overall, parametric open-source games suggest that internal parameter access is not a superficial modeling detail, but it can change the incentives created by gradient-based learning, while introducing new robustness questions about how such access is represented, optimized, and verified.

\section*{Acknowledgement}
We thank Davide Grossi at the University of Groningen for introducing us to the field of computational game theory and for his encouragement to pursue a publication.

\bibliography{main}
\newpage
\clearpage
\appendix

\section{Proofs of Statements}\label{app:proofs}

\subsection{Existence of PPNE}\label{app:existence}

\begin{theorem}[Existence of PPNE]\label{thm:existence}
For any parametric open-source game $G$, a mixed-strategy Nash equilibrium over
$\Theta$ exists. If additionally each map $\theta_i\mapsto U_i(\theta_i,\theta_{-i})$ is
quasiconcave on $\Theta_i$ for every fixed $\theta_{-i}$, then a pure-strategy PPNE exists.
\end{theorem}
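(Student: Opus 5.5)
The argument has two parts. The mixed-strategy claim follows from Glicksberg's theorem; the pure-strategy claim follows from Kakutani's fixed point theorem applied to the best-response correspondence.

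\textbf{Continuity of payoffs.} We first check that each induced payoff $U_i$ is continuous on $\Theta$. Each semantics $\phi_j:\Theta\to\Delta(S_j)$ is continuous by definition. The multilinear extension $u_i:\prod_j\Delta(S_j)\to\mathbb{R}$ is a polynomial in the coordinates $\sigma_j(s_j)$, so it is continuous. Hence $U_i=u_i\circ(\phi_1,\ldots,\phi_n)$ is continuous as a composition of continuous maps.

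\textbf{Mixed equilibrium.} The game $\bar G=(N,(\Theta_i)_{i\in N},(U_i)_{i\in N})$ has finitely many players. Each $\Theta_i$ is a nonempty compact (metric) subset of $\mathbb{R}^{d_i}$, and each payoff is continuous on the product. Glicksberg's theorem therefore applies directly. It yields Borel probability measures $\mu_i\in\Delta(\Theta_i)$ such that no player can profitably deviate from the product measure $\mu=\bigotimes_i\mu_i$. The expected payoffs $\int U_i\,d\mu$ are well defined because $U_i$ is bounded and continuous.

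\textbf{Pure PPNE.} For each $\theta\in\Theta$, define the best-response correspondence
\begin{equation}
BR(\theta)=\prod_{i\in N}\arg\max_{\theta_i'\in\Theta_i}U_i(\theta_i',\theta_{-i}).
\end{equation}
We verify the three hypotheses of Kakutani's theorem in turn.
\begin{itemize}
\item \emph{Nonempty values.} Each factor is nonempty because a continuous function attains its maximum on a compact set.
\item \emph{Convex values.} Fix $\theta_{-i}$ and let $m$ be the maximum value. The argmax set equals the upper level set $\{\theta_i':U_i(\theta_i',\theta_{-i})\ge m\}$. This set is convex by quasiconcavity of $\theta_i\mapsto U_i(\theta_i,\theta_{-i})$. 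The product of convex sets is convex.
\item \emph{Closed graph.} This is the one step needing a real argument, namely continuity of the maximum value $\theta_{-i}\mapsto\max_{\theta_i'}U_i(\theta_i',\theta_{-i})$. Berge's maximum theorem gives it, since the constraint set $\Theta_i$ is constant and compact. Directly: take $\theta^k\to\theta$ and $x^k\in BR(\theta^k)$ with $x^k\to x$. For any $y_i\in\Theta_i$ we have $U_i(x_i^k,\theta_{-i}^k)\ge U_i(y_i,\theta_{-i}^k)$. Passing to the limit using joint continuity of $U_i$ gives $U_i(x_i,\theta_{-i})\ge U_i(y_i,\theta_{-i})$, so $x\in BR(\theta)$.
\end{itemize}

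\textbf{Conclusion.} $\Theta$ is nonempty, compact and convex in $\mathbb{R}^{\sum_i d_i}$, and $BR:\Theta\rightrightarrows\Theta$ is upper hemicontinuous with nonempty convex compact values. Kakutani's theorem therefore gives a fixed point $\theta^\star\in BR(\theta^\star)$. By construction, $\theta^\star$ satisfies the PPNE inequality for every $i$ and every $\theta_i\in\Theta_i$.
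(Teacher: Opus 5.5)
Your proposal is correct and follows essentially the same route as the paper: Glicksberg's theorem for the mixed equilibrium, and Kakutani's theorem for the pure PPNE. In the Kakutani step you apply it to the joint best-response correspondence, which has nonempty values by compactness, convex values by quasiconcavity, and a closed graph (upper hemicontinuity), where the paper invokes Berge and you give a direct sequential argument; your explicit check that $U_i=u_i\circ\phi$ is continuous is also a small addition to what the paper states more tersely.
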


We prove the two existence claims of \zcref[S]{thm:existence} separately.

\begin{itemize}
    \item For the existence of a mixed-strategy equilibrium, observe first that each $\Theta_i$ is a nonempty compact subset of a Euclidean space, hence a compact metric space.  Moreover, by construction of the induced game, each $U_i$ is continuous on the product space $\Theta = \prod_{j\in N} \Theta_j$. Therefore, the induced parametric game is an $n$-player game with compact metric strategy spaces and continuous payoffs. By Glicksberg's generalization of Kakutani's fixed point theorem \citep{glicksberg_further_1952}, such a game admits a mixed-strategy Nash equilibrium. Hence $G$ has a mixed equilibrium over $\Theta$.
    \item We now prove the pure-strategy claim under the additional quasiconcavity assumption. For each player $i$, define the best-response correspondence
    \begin{equation}
B_i(\theta_{-i})
=
\operatorname*{arg\,max}_{\theta_i\in\Theta_i} U_i(\theta_i,\theta_{-i}).
\end{equation}
Since $\Theta_i$ is compact and $U_i(\cdot,\theta_{-i})$ is continuous for each fixed $\theta_{-i}$, the maximum is attained. Thus $B_i(\theta_{-i})$ is nonempty for every $\theta_{-i}$. Because $\Theta_i$ is compact and $U_i$ is continuous, Berge's maximum theorem \citep{berge_topological_1877} implies that $B_i$ has closed graph and is upper hemicontinuous. Since $U_i(\cdot,\theta_{-i})$ is quasiconcave on the convex set $\Theta_i$, each upper contour set $\{\theta_i\in\Theta_i : U_i(\theta_i,\theta_{-i}) \ge c\}$
is convex, and therefore the argmax set $B_i(\theta_{-i})$ is convex. Hence, for every $i$, the operator $B_i : \Theta_{-i} \rightrightarrows \Theta_i$ has nonempty, convex, compact values and is upper hemicontinuous.

Define the joint best-response correspondence as  $B(\theta) = \prod_{i\in N} B_i(\theta_{-i})$ for $\theta\in\Theta$. Since $\Theta = \prod_{i\in N}\Theta_i$ is nonempty, compact, and convex, and since each $B_i$ is upper hemicontinuous with nonempty, convex, compact values, the $B$ has the same properties. Kakutani's fixed point theorem therefore yields some $\theta^\star\in\Theta$ such that $\theta^\star \in B(\theta^\star)$. Equivalently, for every player $i$, $\theta_i^\star \in B_i(\theta_{-i}^\star),$
which means $
U_i(\theta_i^\star,\theta_{-i}^\star) \ge U_i(\theta_i,\theta_{-i}^\star)$
for all $\theta_i\in\Theta_i$.
Thus $\theta^\star$ is a pure-strategy Nash equilibrium of the induced parametric game.
\end{itemize}

\subsection{Phase Transition for Sigmoidal Dynamics}
\label{app:gradients}

We prove a slightly more general version as presented in \zcref[S]{thm:critical_coupling}. For the symmetric $2\times 2$ base game with payoffs $R,S,T,P$, player $1$'s expected payoff as a function of cooperation probabilities $p_1,p_2$ is
\begin{equation}
    U_1(p_1,p_2)
    =
    p_1p_2R + p_1(1-p_2)S + (1-p_1)p_2T + (1-p_1)(1-p_2)P.
\end{equation}
Hence
\begin{align}
    \frac{\partial U_1}{\partial p_1}
    &= p_2(R-T) + (1-p_2)(S-P), \\
    \frac{\partial U_1}{\partial p_2}
    &= p_1(R-S) + (1-p_1)(T-P).
\end{align}

Under open-source sigmoid semantics
\begin{equation}
p_i(\theta)=\sigma(\theta_i+\gamma\theta_{-i}),
\qquad
\sigma'(x)=\sigma(x)(1-\sigma(x)),
\end{equation}
we have
\begin{align}
    \frac{\partial p_1}{\partial \theta_1} &= p_1(1-p_1), \\
    \frac{\partial p_2}{\partial \theta_1} &= \gamma\,p_2(1-p_2).
\end{align}
Applying the chain rule gives
\begin{equation}
\label{eq:general_grad}
    \frac{\partial U_1}{\partial \theta_1}
    =
    \bigl[p_2(R-T)+(1-p_2)(S-P)\bigr]\,p_1(1-p_1)
    +
    \bigl[p_1(R-S)+(1-p_1)(T-P)\bigr]\,\gamma\,p_2(1-p_2).
\end{equation}

This expression separates the direct closed-source incentive from the open-source cross-player incentive. Interior critical points satisfy $\partial_{\theta_1}U_1=0$, which yields the threshold coupling
\begin{equation}
\label{eq:gamma_general}
    \gamma
    =
    \frac{p_1(1-p_1)}{p_2(1-p_2)}
    \cdot
    \frac{p_2(T-R)+(1-p_2)(P-S)}
         {p_1(R-S)+(1-p_1)(T-P)}.
\end{equation}

In the symmetric case $p_1=p_2=p$, this reduces to
\begin{equation}\label{eq:gamma_p}
    \gamma(p)
    =
    \frac{p(T-R)+(1-p)(P-S)}
         {p(R-S)+(1-p)(T-P)}.
\end{equation}

The threshold in \zcref[S]{thm:critical_coupling} is the special case corresponding to the symmetric midpoint $\theta=(0,0)$, for which $p_1=p_2=\tfrac12$. Substituting $p=\tfrac12$ into \eqref{eq:gamma_p} gives
\begin{equation}
    \gamma^\star
    =
    \gamma\left(\tfrac12\right)
    =
    \frac{T+P-R-S}{R+T-P-S}.
\end{equation}

Equivalently, evaluating \eqref{eq:general_grad} at $p_1=p_2=\tfrac12$ yields
\begin{equation}
    \frac{\partial U_1}{\partial \theta_1}\Big|_{\theta=(0,0)}
    =
    \frac18
    \Bigl[(R+S-T-P)+\gamma(R+T-S-P)\Bigr],
\end{equation}
so under the assumption $R+T-P-S>0$,
\begin{equation}
\operatorname{sgn}\left(\frac{\partial U_1}{\partial \theta_1}(0,0)\right)
=
\operatorname{sgn}\bigl(\gamma-\gamma^\star\bigr).
\end{equation}

For completeness, the limiting threshold values near full defection and full cooperation are
\begin{align}
    \gamma_{\mathrm{def}}
    &= \lim_{p\to 0}\gamma(p)
    = \frac{P-S}{T-P}, \\
    \gamma_{\mathrm{coop}}
    &= \lim_{p\to 1}\gamma(p)
    = \frac{T-R}{R-S}.
\end{align}
Moreover,
\begin{equation}
    \frac{d\gamma}{dp}
    =
    \frac{(T+S-R-P)(T-S)}
         {\bigl[p(R-S)+(1-p)(T-P)\bigr]^2},
\end{equation}
so the monotonicity of $\gamma(p)$ is determined by the sign of $T+S-R-P$. The learning dynamics claim follows from continuity of the gradient field, as near $\theta=(0,0)$, the gradient of each player's payoff with respect to their own parameter has the same sign as $\gamma-\gamma^\star$ (by symmetry), so simultaneous gradient ascent either pushes both parameters toward $+B$ or toward $-B$.

In other words, under open-source sigmoid semantics with coupling $\gamma\ge 0$, if $\gamma < \gamma^\star$ and the initialization lies in a neighborhood of $\theta = (0,0)$, each player has a local incentive to move in the direction of less cooperation, and learning dynamics starting near this midpoint drift toward defection. Conversely, if $\gamma > \gamma^\star$, each player has a local incentive to move in the direction of higher-cooperation parameter values, and learning dynamics starting near this midpoint drift toward cooperation.

\subsection{Boundary Characterizations of PPNEs}
The full statement underlying \zcref[S]{thm:boundary_characterization} is that for a fixed $\gamma\geq 0$, the following are equivalent:
\begin{enumerate}[label=(\roman*)]
    \item $(B,B)$ is a PPNE.
    \item For each $i\in\{1,2\}$, $B\in \argmax_{\theta_i\in[-B,B]} U_i(\theta_i,B)$.
    \item For each $i\in\{1,2\}$ and all $\theta_i\in[-B,B]$, $U_i(B,B)\ge U_i(\theta_i,B)$.
\end{enumerate}
If additionally $\theta_i\mapsto U_i(\theta_i,B)$ is differentiable on $(-B,B)$, then (iii) is implied by
\begin{equation}
    \frac{d}{d\theta_i}U_i(\theta_i,B)\ge 0\quad\text{for all }\theta_i\in(-B,B)\ \text{and }i=1,2.
\end{equation}

\begin{itemize}
    \item To show the equivalence claim, recall that by definition, \((B,B)\) is a PPNE if and only if, for each player \(i\in\{1,2\}\),
\[
U_i(B,B)\ge U_i(\theta_i,B)
\qquad
\text{for all }\theta_i\in[-B,B].
\]
This is exactly statement (iii). Hence (i) and (iii) are equivalent.

Next, fix \(i\in\{1,2\}\). Statement (ii) says that
\[
B\in \argmax_{\theta_i\in[-B,B]} U_i(\theta_i,B),
\]
which means precisely that
\[
U_i(B,B)\ge U_i(\theta_i,B)
\qquad
\text{for all }\theta_i\in[-B,B].
\]
This is exactly statement (iii). Thus (ii) and (iii) are also equivalent.
\item For the differentiability claim, fix $i\in\{1,2\}$ and suppose $f_i(\theta_i)=U_i(\theta_i,B)$
is differentiable on $(-B,B)$ and satisfies
$f_i'(\theta_i)\geq 0$ for all $\theta_i\in(-B,B)$.
Then $f_i$ is nondecreasing on the interval $[-B,B]$. Therefore, for every $\theta_i\in[-B,B]$,
\begin{equation}
    U_i(\theta_i,B)=f_i(\theta_i)\le f_i(B)=U_i(B,B),
\end{equation}
which shows (iii).
\end{itemize}

\subsection{Neural Semantics}
\label{app:neural_threshold}

We now prove the neural analogue of the phase transition result.

\begin{theorem}[First-order threshold for neural semantics]
\label{thm:neural_threshold}
Let $p_i$ follow the neural open-source semantics with parameters $(\alpha,\beta,W)$, and let $\alpha>0$. Define
\begin{equation}
    \gamma^\star = \frac{T+P-R-S}{R+T-P-S}.
\end{equation}
Then, at the symmetric midpoint $\theta=(0,0)$,
\begin{equation}
    \frac{\partial U_1}{\partial \theta_1}\bigg\vert_{\theta=\mathbf{0}} > 0
    \quad\Longleftrightarrow\quad
    \frac{\beta}{\alpha} > \gamma^\star.
\end{equation}
Moreover, the ratio $\beta/\alpha$ is exactly equal to the Jacobian ratio
\begin{equation}
    \frac{\partial p_i/\partial \theta_{-i}}{\partial p_i/\partial \theta_i}
\end{equation}
evaluated at $\theta=(0,0)$, independently of the residual weights $W$.
\end{theorem}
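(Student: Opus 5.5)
The plan is to reduce the statement to the computation already carried out for the sigmoid semantics in \zcref[S]{app:gradients}. Write $z_i(\theta)=\alpha\theta_i+\beta\theta_{-i}+h(\theta_i^2,\theta_{-i}^2,\theta_i\theta_{-i};W)$, so that $p_i=\sigma(z_i)$. The first step is to compute the Jacobian of $z_i$ at $\theta=(0,0)$. By the chain rule, $\partial_{\theta_1}h$ is a linear combination of the partials of $h$ in its three arguments, with coefficients $2\theta_1$, $0$ and $\theta_2$. Since $h$ is smooth (a $\tanh$ layer composed with a linear map), its partials are finite, and both coefficients vanish at the origin. The same holds for $\partial_{\theta_2}h$. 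Hence
\begin{equation}
\partial_{\theta_i}z_i(\mathbf 0)=\alpha,\qquad \partial_{\theta_{-i}}z_i(\mathbf 0)=\beta,
\end{equation}
independently of $W$.

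Next, $p_i(\mathbf 0)=\sigma(h(0,0,0;W))=:q$, which need not equal $\tfrac12$. This is the main subtlety. By symmetry of the semantics (the same $\alpha,\beta,W$ for both players), we get $p_1(\mathbf 0)=p_2(\mathbf 0)=q$ and $\sigma'(z_i(\mathbf 0))=q(1-q)>0$. Therefore
\begin{equation}
\partial_{\theta_i}p_i(\mathbf 0)=\alpha\,q(1-q),\qquad \partial_{\theta_{-i}}p_i(\mathbf 0)=\beta\,q(1-q).
\end{equation}
Their ratio is $\beta/\alpha$, since $\alpha>0$ and $q(1-q)>0$. This proves the Jacobian claim.

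For the threshold claim, apply the chain rule exactly as in \eqref{eq:general_grad}. This gives
\begin{equation}
\partial_{\theta_1}U_1(\mathbf 0)=q(1-q)\Bigl(\alpha\bigl[q(R-T)+(1-q)(S-P)\bigr]+\beta\bigl[q(R-S)+(1-q)(T-P)\bigr]\Bigr).
\end{equation}
Dividing by $\alpha q(1-q)>0$, the sign condition becomes $\beta/\alpha>\gamma(q)$, with $\gamma(q)$ as in \eqref{eq:gamma_p}. This rearrangement requires the denominator $q(R-S)+(1-q)(T-P)$ to be positive. The threshold in the statement, $\gamma^\star=\gamma(\tfrac12)$, is therefore recovered exactly when $q=\tfrac12$, that is, when $h(0,0,0;W)=0$.

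I expect this last point to be the main obstacle. The statement as written implicitly assumes the residual vanishes at the origin, for instance with $b_1=0$, so that $\tanh(W_1^\top\mathbf 0+b_1)=0$, or with a centered residual $h(x)-h(0)$. The main text's claim that $h$ "cannot change the local incentive condition" only addresses the first-order terms, not this zeroth-order offset. I would make the assumption $h(\mathbf 0;W)=0$ explicit; the paper's residual network satisfies it whenever $b_1=0$. Under that assumption the equivalence follows exactly as in \zcref[S]{thm:critical_coupling}, using $R+T-P-S>0$. Without it, I would state the general version with $\gamma(q)$ in place of $\gamma^\star$, again assuming $q(R-S)+(1-q)(T-P)>0$.
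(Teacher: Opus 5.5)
Your argument follows the paper's route: the residual sees only quadratic monomials, so $\nabla_\theta h$ vanishes at the origin, and the chain rule with the payoff partials from the sigmoid analysis does the rest. Where you diverge, you are right and the paper is not. The paper's proof simply writes $p_1=p_2=\sigma(0)=\tfrac12$ at $\theta=\mathbf 0$, which drops the zeroth-order term $h(\mathbf 0;W)=W_2^\top\tanh(b_1)$. By keeping $q=\sigma(h(\mathbf 0;W))$ general, you obtain the Jacobian-ratio claim for every $W$, since the common factor $q(1-q)$ cancels, and you obtain the correct general threshold $\beta/\alpha>\gamma(q)$.

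The extra hypothesis is genuinely needed, not merely convenient. In the Prisoner's Dilemma ($R=3$, $S=0$, $T=5$, $P=1$) one has $\gamma(q)=(1+q)/(4-q)$. Any $W$ with $W_2^\top\tanh(b_1)=\ln 9$ gives $q=0.9$, hence $\gamma(q)=19/31>3/7=\gamma^\star$. With $(\alpha,\beta)=(1,\tfrac12)$, your formula then gives $\partial_{\theta_1}U_1(\mathbf 0)=0.09\,(-1.9+1.55)<0$, even though $\beta/\alpha>\gamma^\star$.

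So the threshold part of the statement must assume $h(\mathbf 0;W)=0$, for example $b_1=0$, which holds only at initialization in the experiments. It must also assume $R+T-P-S>0$, which both you and the paper's proof use but the statement leaves implicit.
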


\noindent \textit{Proof.} Recall that under neural open-source semantics,
\begin{equation}
    p_i(\theta)
    =
    \sigma\left(
        \alpha \theta_i + \beta \theta_{-i}
        + h(\theta_i^2,\theta_{-i}^2,\theta_i\theta_{-i};W)
    \right),
\end{equation}
where
\[
h(\theta_i^2,\theta_{-i}^2,\theta_i\theta_{-i};W)
=
W_2^\top \tanh\left(W_1^\top[\theta_i^2,\theta_{-i}^2,\theta_i\theta_{-i}] + b_1\right).
\]
By construction, the residual term depends only on quadratic monomials in $(\theta_i,\theta_{-i})$. Hence
\begin{equation}
    \nabla_\theta h(0,0;W)=0.
\end{equation}
In particular, at the symmetric midpoint $\theta=(0,0)$, the first-order derivatives of the cooperation probabilities are determined entirely by $\alpha$ and $\beta$. Since $p_1=p_2=\sigma(0)=\tfrac12$, we have
\begin{equation}
    \frac{\partial p_1}{\partial \theta_1}(0,0)
    =
    \sigma'(0)\,\alpha
    =
    \frac{\alpha}{4},
    \qquad
    \frac{\partial p_2}{\partial \theta_1}(0,0)
    =
    \sigma'(0)\,\beta
    =
    \frac{\beta}{4}.
\end{equation}

\noindent Using the payoff derivatives from \zcref[S]{app:gradients}, evaluated at $p_1=p_2=\tfrac12$,
\begin{equation}
    \frac{\partial U_1}{\partial p_1}\Big|_{p=\frac12}
    =
    \frac{R+S-T-P}{2},
    \qquad
    \frac{\partial U_1}{\partial p_2}\Big|_{p=\frac12}
    =
    \frac{R+T-S-P}{2},
\end{equation}
the chain rule gives
\begin{equation}
    \frac{\partial U_1}{\partial \theta_1}(0,0)
    =
    \frac{\partial U_1}{\partial p_1}\frac{\partial p_1}{\partial \theta_1}
    +
    \frac{\partial U_1}{\partial p_2}\frac{\partial p_2}{\partial \theta_1}
    =
    \frac{1}{8}
    \Bigl[
        \alpha(R+S-T-P)
        +
        \beta(R+T-S-P)
    \Bigr].
\end{equation}
Assuming $\alpha>0$ and $R+T-P-S>0$, this quantity is positive if and only if
\begin{equation}
    \frac{\beta}{\alpha}
    >
    \frac{T+P-R-S}{R+T-P-S}
    =
    \gamma^\star.
\end{equation}
Thus, the local incentive at the symmetric midpoint points toward increased cooperation exactly when $\beta/\alpha>\gamma^\star$, which proves the neural threshold result. Moreover,
\begin{equation}
    \frac{\partial p_i/\partial \theta_{-i}}{\partial p_i/\partial \theta_i}\Bigg|_{\theta=(0,0)}
    =
    \frac{\beta/4}{\alpha/4}
    =
    \frac{\beta}{\alpha},
\end{equation}
so the ratio $\beta/\alpha$ is exactly the Jacobian ratio governing first-order cross-player sensitivity at the midpoint, independently of the residual weights $W$. Equivalently, $\beta/\alpha$ is the ratio between cross-player and self-player Jacobian sensitivity of the cooperation probability at the midpoint. 
\section{Experimental Details and Hyperparameters}
\label{app:experiments}
This appendix collects implementation details, optimization settings, and hyperparameters for all experiments reported in the paper. Unless otherwise stated, all experiments are run on the induced parametric game rather than as repeated play of the underlying base game. Thus, when we refer to learning dynamics, we mean numerical optimization trajectories in parameter space under projected gradient ascent. For each experiment, we report the base game, the semantics family, the optimization procedure, the initialization scheme, and the quantities visualized in the corresponding figures.

\subsection{Experiment 1: Open-Source versus Closed Semantics in the Prisoner's Dilemma}
\label{app:exp1}
This experiment compares open-source and closed-source sigmoid semantics in the one-shot Prisoner's Dilemma with payoff matrix $R = 3, P = 1, T = 5, S = 0$. The cross-player coupling for the open-source sigmoidal semantics is $\gamma=1.5$, while the closed-source baseline is recovered for $\gamma = 0$. In both cases, the parameter domain is the box $\Theta=[-B,B]^2, B=5$. For compactness, hyperparameters are also summarized in \zcref[S]{tab:exp1_hparams}.

For each initialization $\theta^0=(\theta_1^0,\theta_2^0)$, we run simultaneous projected gradient ascent on the induced payoffs,
\begin{equation}
\theta_i^{t+1}
=
\Pi_{[-B,B]}
\bigl(
\theta_i^t+\eta\,\partial_{\theta_i}U_i(\theta^t)
\bigr),
\qquad i\in\{1,2\},
\end{equation}
with learning rate $\eta=0.15$. In the implementation, the partial derivatives are approximated numerically by finite differences with step size $\varepsilon=10^{-5}$. Each trajectory is run for $10{,}000$ iterations, although only the first $300$ steps are displayed in the learning-dynamics panel of \zcref[S]{fig:pd_open_vs_closed}. Parameters are projected back to $[-5,5]$ after every update. The repeated iterations in the experiment do not correspond to repeated play of the Prisoner's Dilemma itself, but to a numerical learning procedure in the induced parametric game over $\Theta=[-B,B]^2$ over timesteps $t \geq 0$ starting from an initial parameter profile $\theta^0=(\theta_1^0,\theta_2^0)$.

Initial conditions are sampled independently and uniformly from $[-2.5,2.5]^2$,
using 10 random initializations and seed 0. To visualize the induced landscape, social welfare $U_1+U_2$ is evaluated on a $200\times 200$ grid over $\Theta$.

\begin{table}[htpb]
\centering
\caption{Hyperparameters to reproduce \zcref[S]{fig:pd_open_vs_closed}.}
\label{tab:exp1_hparams}
\begin{tabular}{ll}
\toprule
Parameter & Value \\
\midrule
Base game payoffs & $R=3,\; P=1,\; T=5,\; S=0$ \\
Open-source coupling & $\gamma = 1.5$ \\
Closed-source coupling & $\gamma = 0$ \\
Learning rate & $\eta = 0.15$ \\
Number of gradient steps & $600$ \\
Projection bound & $B = 5$ \\
Parameter domain & $\Theta=[-5,5]^2$ \\
Number of random initializations & $10$ \\
Initialization distribution & uniform on $[-2.5,2.5]^2$ \\
Random seed & $0$ \\
Finite-difference step size & $\varepsilon = 10^{-5}$ \\
\bottomrule
\end{tabular}
\end{table}

\subsection{Experiment 2: Gamma Sweep and Phase Transition}
\label{app:exp2}

This experiment largely mirrors the setup described in \zcref[S]{app:exp1}. For each game $G$, we evaluate the open-source semantics over a uniform grid of $100$ values of $\gamma$ between $0$ and $1.5$. The parameter domain is $\Theta=[-B,B]^2, B=5$. For each game and each value of $\gamma$, we run simultaneous projected gradient ascent on the induced parametric game, with learning rate $\eta=0.01$ for $600$ steps. In the implementation, the partial derivatives are approximated numerically by finite differences. For every $\gamma$, we average over $20$ random initializations. Initial parameters are sampled from a centered Gaussian distribution, with standard deviation $0.1$ for all games except Chicken, where the standard deviation is reduced to $0.01$.

For each run, we record the terminal mean cooperation probability $\bar p=\frac{p_1+p_2}{2}$and the terminal social welfare $U_1+U_2$. Across random initializations, we report the mean and standard deviation of both quantities as functions of $\gamma$. The empirical critical coupling $\gamma^\ast_{\mathrm{emp}}$ is defined as the first sampled value of $\gamma$ at which the averaged cooperation curve crosses the threshold $\bar p=0.5$. This is then compared against the analytical threshold $\gamma^\star=\frac{T+P-R-S}{R+T-P-S}$ derived in \zcref[S]{thm:critical_coupling}. Hyperparameters are summarized in \zcref[S]{tab:exp2_hparams}.

\begin{table}[htpb]
\centering
\caption{Hyperparameters to reproduce \zcref[S]{fig:critical_coupling_phase_transition}.}
\label{tab:exp2_hparams}
\begin{tabular}{ll}
\toprule
Parameter & Value \\
\midrule
Swept coupling range & $\gamma \in [0,1.5]$ \\
Number of $\gamma$ values & $100$ \\
Projection bound & $B = 5$ \\
Parameter domain & $\Theta=[-5,5]^2$ \\
Learning rate & $\eta = 0.01$ \\
Number of gradient steps & $600$ \\
Number of random initializations & $20$ \\
Initialization distribution & Gaussian, centered at $0$ \\
Initialization scale & $0.1$ for all games, $0.01$ for Chicken \\
Empirical threshold criterion & first $\gamma$ with mean $\bar p > 0.5$ \\
Finite-difference step size & $\varepsilon = 10^{-5}$ \\
Stag hunt & $R=3.0,\; S=0.0,\; T=2.0,\; P=1.0; \; \gamma^\star = 0$\\
Chicken & $R=3.0,\; S=1.0,\; T=5.0,\; P=0.0; \; \gamma^\star = 1/7$ \\
Prisoner's dilemma & $R=3.0,\; S=0.0,\; T=5.0,\; P=1.0; \; \gamma^\star = 3/7$\\
Harsh prisoner's dilemma & $R=3.0,\; S=0.0,\; T=6.0,\; P=2.0; \; \gamma^\star = 5/7$ \\
\bottomrule
\end{tabular}
\end{table}

\subsection{Experiment 3: PPNE verification}
\label{app:exp_3_ppne_verification}

This experiment uses the same base game, parameter domain, and projected gradient-ascent procedure as \zcref[S]{app:exp1}, with identical learning-rate, initialization scheme, and optimization horizon and other hyperparameters.

The goal of this experiment is not to study the learning trajectories themselves, but to verify whether a boundary candidate $\theta^\star$ is in fact a parametric program Nash equilibrium. By \zcref[S]{thm:boundary_characterization}, this reduces to a one-dimensional unilateral-deviation check for each player. Concretely, after selecting a candidate boundary point $\theta^\star$, we freeze the opponent at $\theta^\star_{-i}$ and evaluate the best-response objective $\theta_i \mapsto U_i(\theta_i,\theta^\star_{-i})$ over the full interval $[-5,5]$. In practice, this is implemented as a dense grid sweep with 1000 evaluation points, together with the corresponding derivative curve $\partial U_i/\partial \theta_i$. A candidate $\theta^\star$ is certified as a PPNE only if, for both players, the maximum of the swept best-response curve is attained at $\theta_i^\star$. Equivalently, the unilateral-improvement gap $\max_{\theta_i \in [-5,5]} U_i(\theta_i,\theta^\star_{-i}) - U_i(\theta^\star)$ must be numerically zero up to tolerance.

For the open-source condition, we use sigmoid semantics with coupling $\gamma=1.5$ as in \zcref[S]{fig:ppne_verification} where the best-response payoff is maximized at the cooperative boundary $\theta_i^\star=B$, so $(B,B)$ is a PPNE. In \zcref[S]{fig:ppne_verification_nonppne} we use $\gamma=0.7$, the payoff attains a strictly larger value at an interior deviation, showing that $(B,B)$ is not a PPNE despite being a boundary candidate. Candidate boundary points are taken to be $(5,5)$ in the open-source case and $(-5,-5)$ in the closed-source case. To compare with the gradient-ascent dynamics, we also run the same solver as in \zcref[S]{app:exp1} from 10 random initializations drawn uniformly from $[-2.5,2.5]^2$, using learning rate $0.15$, $600$ update steps, clipping bound $B=5$, and random seed $0$, which is the same configuration as in \zcref[S]{tab:exp1_hparams}.

The numerical verification results for $\gamma = 0.7$, intended not to be a PPNE, are shown in \zcref[S]{fig:ppne_verification_nonppne}. In this configuration, the open-source boundary candidate $(5,5)$ is not a PPNE, as for each player, the best-response curve achieves a strictly larger value away from the boundary, with a unilateral-improvement gap of approximately $1$, in contrast to \zcref[S]{fig:ppne_verification}, where the gap is approximately 0. This illustrates the distinction emphasized in the main text, i.e., the convergence of gradient dynamics toward a boundary region does not by itself imply that the limiting boundary point is an equilibrium of the induced game. The additional figure is included to show how the same sweep procedure certifies failure of the PPNE property when a profitable unilateral deviation exists.

\begin{figure}[htpb]
    \centering
    \includegraphics[width=\linewidth]{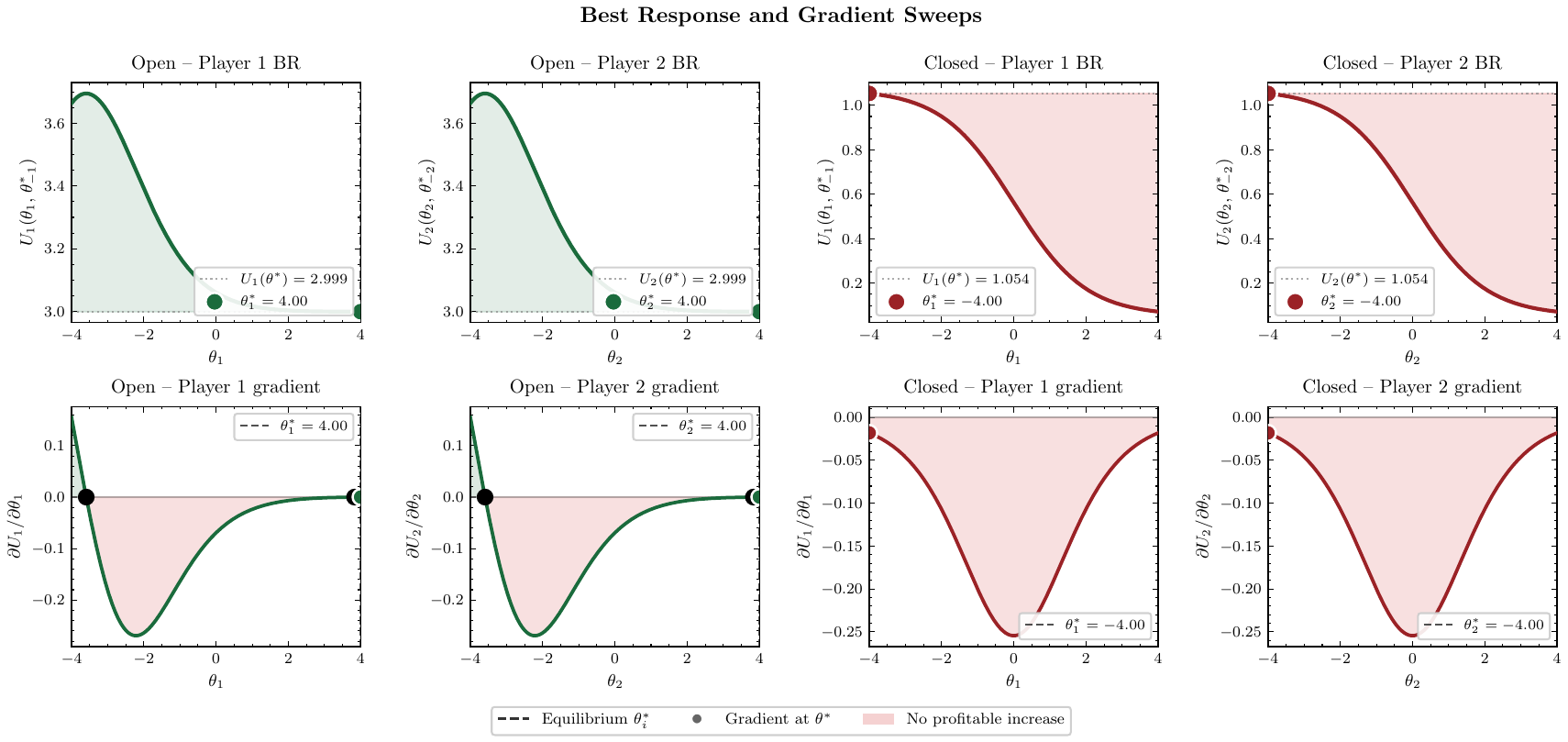}
    \caption{Boundary best-response verification for a non-PPNE open-source candidate. Using the same sweep procedure as in \zcref[S]{fig:ppne_verification}, but with weaker open-source coupling $\gamma=0.7$ rather than $\gamma=1.5$, the candidate boundary point $(B,B)$ fails the PPNE test. The best-response curves are not maximized at $\theta_i^\star=B$, so each player has a profitable unilateral deviation when the opponent is fixed at the boundary. This contrasts with the main-text setting in \zcref[S]{fig:ppne_verification}, where $\gamma=1.5$ and the cooperative boundary is verified as a true PPNE. The solid black dot in Panels (e) and (f) indicates the best response found by the fine-grid sweep.}
    \label{fig:ppne_verification_nonppne}
\end{figure}

\subsection{Experiment 4: Neural semantics}\label{app:exp_4_neural}

This experiment extends the gradient-ascent setup of \zcref[S]{app:exp1} to the neural semantics class introduced in \zcref[S]{sec:neural_semantics}. We evaluate the four canonical symmetric $2\times 2$ games shown in \zcref[S]{fig:neural_semantics_learning}, namely Stag Hunt, Chicken, Prisoner's Dilemma, and Harsh PD. In all conditions, we report the social welfare $U_1+U_2$ as a function of optimization time and average across multiple random seeds.

We compare six conditions. The first two are the sigmoid baselines from the main text, namely sigmoid-closed with $\gamma=0$ and sigmoid-open with $\gamma=1.5$. The next two use the neural semantics of \zcref[S]{sec:neural_semantics} with fixed residual weights. In the closed condition, we set $(\alpha,\beta)=(1,0)$, while in the open condition we set $(\alpha,\beta)=(1,1.5)$, so that $\beta/\alpha=1.5$ matches the open sigmoid coupling. In both cases, the network weights $(W_1,b_1,W_2)$ are sampled once at initialization and then held fixed throughout optimization. The same sampled residual network is used for the fixed closed and fixed open conditions, so the comparison isolates the effect of the linear coefficients.

The final two conditions jointly learn both the player parameters and the semantics parameters. In the warm-start condition, the learnable parameters are initialized with $(\alpha_0,\beta_0)=(1,1.5)$, so the initial first-order ratio already lies in the cooperative regime predicted by \zcref[S]{app:neural_threshold}. In the cold-start condition, they are initialized with $(\alpha_0,\beta_0)=(0,0)$. For each player, the learnable parameter vector consists of the current scalar parameter $\theta_i$, the linear coefficients $\alpha,\beta$, and the flattened residual-network weights. Gradients are approximated by finite differences, and all parameters are updated by simultaneous gradient ascent on each player's own payoff.

Player parameters are initialized from a clipped normal distribution centered at zero, with a clipping radius of $2.0$. During optimization, the player parameters are clipped to $[-10,10]$, while the learnable semantics parameters are clipped componentwise to $[-5,5]$. The residual network uses hidden width $K=4$, initialization scale $0.05$, and zero initial bias vector $b_1$. The full set of hyperparameters is reported in \zcref[S]{tab:neural_semantics_hparams}.

\begin{table}[t]
\centering
\caption{Hyperparameters for the neural-semantics experiment in \zcref[S]{fig:neural_semantics_learning}. The fixed-neural conditions use the same sampled residual network in both the closed and open variants, differing only in the linear coefficients $(\alpha,\beta)$. The learned neural conditions jointly optimize the player parameters and all semantics parameters.}
\label{tab:neural_semantics_hparams}
\begin{tabular}{ll}
\toprule
Hyperparameter & Value \\
\midrule
Open sigmoid coupling $\gamma$ & $1.5$ \\
Hidden width $K$ & $4$ \\
Residual weight scale & $0.05$ \\
Residual-network seed & $0$ \\
Finite-difference step $\varepsilon$ & $10^{-5}$ \\
Player learning rate & $0.10$ \\
Semantics-parameter learning rate & $0.005$ \\
Player clipping bound & $10.0$ \\
Semantics-parameter clipping bound & $5.0$ \\
Player initialization clip & $2.0$ \\
Number of steps for fixed-semantics runs & $1000$ \\
Number of steps for learned-semantics runs & $1000$ \\
Number of seeds for fixed-semantics runs & $5$ \\
Number of seeds for learned-semantics runs & $5$ \\
\bottomrule
\end{tabular}
\end{table}

The main empirical pattern is that the neural class reproduces the sigmoid baselines when the first-order ratio $\beta/\alpha$ is matched, but joint optimization succeeds reliably only under warm initialization. This is consistent with \zcref[S]{app:neural_threshold}, which identifies $\beta/\alpha>\gamma^\star$ as the condition under which cooperation is locally attractive at the symmetric midpoint. The fixed-open and warm-start learned conditions satisfy this requirement from the outset, whereas the cold-start learned condition must discover it through optimization and fails to do so.
\clearpage
\newpage
\section{Camera-Ready Edits}
We improved the quality of the figures, bolded and shortened their titles, and improved the captions within the manuscript. We also added an Acknowledgment section before the references.

\end{document}